\newtheorem{theorem}{\bf Theorem} \newtheorem{definition}[theorem]{\bf Definition} 
\newtheorem{lemma}[theorem]{\bf Lemma} \newtheorem{remark}[theorem]{\bf Remark}
  \newtheorem{proposition}[theorem]{\bf Proposition} 
\newtheorem{assumption}[theorem]{\bf Assumption}  
\newtheorem{Algorithm}[theorem]{\bf Algorithm}
\title{\LARGE \bf
Robust Constraint Satisfaction in Data-Driven MPC
}
\author{Julian Berberich$^1$, Johannes K\"ohler$^1$, Matthias A. M\"uller$^2$, Frank Allg\"ower$^1$% <-this % stops a space
\thanks{
This work was funded by Deutsche Forschungsgemeinschaft (DFG, German Research Foundation) under Germany's Excellence Strategy - EXC 2075 - 390740016.
The authors thank the International Max Planck Research School for Intelligent Systems (IMPRS-IS) for supporting Julian Berberich, and the International Research Training Group “Soft Tissue Robotics” (GRK 2198/1).}
\thanks{$^1$Julian Berberich, Johannes K\"ohler, and Frank Allg\"ower are with the Institute for Systems Theory and Automatic Control, University of Stuttgart, 70550 Stuttgart, Germany (email:$\{$ julian.berberich, johannes.koehler, frank.allgower$\}$@ist.uni-stuttgart.de)}
\thanks{$^2$Matthias A. M\"uller is with the Leibniz University Hannover, Institute of Automatic Control, 30167 Hannover, Germany (e-mail:mueller@irt.uni-hannover.de)}}
\begin{document}
\IEEEpubid{\begin{minipage}{\textwidth}\ \\[12pt] \\ \\
\copyright 2020 IEEE. This version has been accepted for publication in Proc. IEEE Conference on Decision and Control (CDC), 2020. Personal use of this material is permitted. Permission from IEEE must be obtained for all other uses, in any current or future media, including reprinting/republishing this material for advertising or promotional purposes, creating new collective works, for resale or redistribution to servers or lists, or reuse of any copyrighted component of this work in other works.
\end{minipage}}

\maketitle
%\thispagestyle{empty}
%\pagestyle{empty}

%%%%%%%%%%%%%%%%%%%%%%%%%%%%%%%%%%%%%%%%%%%%%%%%%%%%%%%%%%%%%%%%%%%%%%%%%%%%%%%%
\begin{abstract}
We propose a purely data-driven model predictive control (MPC) scheme to control unknown linear time-invariant systems with guarantees on stability and constraint satisfaction in the presence of noisy data.
The scheme predicts future trajectories based on data-dependent Hankel matrices, which span the full system behavior if the input is persistently exciting.
This paper extends previous work on data-driven MPC by including a suitable constraint tightening which ensures that the closed-loop trajectory satisfies desired pointwise-in-time output constraints.
Furthermore, we provide estimation procedures to compute system constants related to controllability and observability, which are required to implement the constraint tightening.
The practicality of the proposed approach is illustrated via a numerical example.
\end{abstract}
%
%!TEX root = ./DataDriven_IO.tex
%%%%%%%%%%%%%%%%%%%%%%%%%%%%%%%%%%%%%%%%%%%%%%%%%%%%%%%%%%%%%%%%%%%%%%%%%%%%%%%%

\section{Introduction}
Satisfying hard input and output constraints is a challenging control problem, in particular in the presence of uncertainty.
Perhaps the most suitable method to deal with constraints is model predictive control (MPC), which relies on the repeated solution of an open-loop optimal control problem~\cite{Rawlings17}.
While most of the existing literature on MPC is based on some form of model knowledge, there have been various recent contributions in the field of \emph{data-driven} MPC~\cite{Yang15,Coulson19,Coulson19b,Berberich19b,Berberich20c}.
In these works, the key idea is to replace the standard model commonly used for prediction in MPC by Hankel matrices consisting of one input-output data trajectory.
It is shown in~\cite{Willems05} for linear time-invariant (LTI) systems that, if the input generating the data is persistently exciting, then these Hankel matrices span the full system behavior and hence, they provide an implicit model characterizing all possible future trajectories.
A first theoretical analysis of closed-loop properties under data-driven MPC based on~\cite{Willems05} is provided in~\cite{Berberich19b}, both for the nominal case and for the scenario that the output measurements are perturbed by noise.
In particular, closed-loop guarantees on (practical) stability are derived and connections between the richness of the data, design parameters and desirable closed-loop properties are revealed.
This formulation has been extended to setpoint tracking in~\cite{Berberich20c}.
While~\cite{Berberich19b} and~\cite{Berberich20c} both prove closed-loop constraint satisfaction of the output for the nominal noise-free case, deriving an analogous result \emph{in the presence of noise} remains an open problem.
In this paper, we augment the MPC scheme proposed in~\cite{Berberich19b} by an output constraint tightening, which ensures that the true closed-loop output satisfies pointwise-in-time constraints.
The key contributions of this paper are the proposed constraint tightening, the proof of closed-loop constraint satisfaction, as well as a data-based estimation procedure for two system constants that are required to implement the constraint tightening.
\IEEEpubidadjcol

Generally speaking, data-driven MPC schemes~\cite{Yang15,Coulson19,Coulson19b,Berberich19b,Berberich20c} are inherently output-feedback MPC schemes, since no state measurement is used. 
Due to the presence of measurement noise in the data, the proposed approach can be seen as an alternative to model-based robust output-feedback MPC~\cite{Chisci02,Mayne06}, MPC for parametric uncertainty (e.g., adaptive MPC~\cite{Lorenzen19}), or learning-based MPC~\cite{Aswani13,Hewing20}.
Possibly competitive output-feedback MPC schemes for uncertain systems have recently been proposed in~\cite{tanaskovic2014adaptive,Terzi19}, using set membership methods to obtain a model.
We compare the proposed data-driven MPC scheme with the scheme presented in~\cite{Terzi19} via a numerical example.

The remainder of the paper is structured as follows.
After introducing some preliminaries in Section~\ref{sec:setting}, we present the proposed MPC scheme in Section~\ref{sec:MPC} together with the novel constraint tightening.
Section~\ref{sec:theory} contains a theoretical analysis of this MPC scheme, including a proof of recursive feasibility and constraint satisfaction of the closed loop.
In Section~\ref{sec:system_constants}, we propose a data-based estimation procedure for the system constants used in the constraint tightening.
Finally, Sections~\ref{sec:example} and~\ref{sec:conclusion} contain a numerical example illustrating the applicability of the presented approach and a conclusion, respectively.

\section{Setting}\label{sec:setting}
For a vector $x\in\mathbb{R}^n$, we denote by $\lVert x\rVert_1,\lVert x\rVert_2$, and $\lVert x\rVert_\infty$ the $1$-, $2$-, and $\infty$-norm, respectively, and similarly for induced matrix norms.
Further, we define $\lVert x\rVert_P=\sqrt{x^\top Px}$ for a positive definite matrix $P=P^\top\succ0$.
The set of integers in the interval $[a,b]$ is denoted by $\mathbb{I}_{[a,b]}$.
We denote the interior of a set $M$ by $\text{int}(M)$.
For a sequence $\{u_k\}_{k=0}^{N-1}$, we define the Hankel matrix
\begin{align*}
H_L(u)=\begin{bmatrix}u_0&u_1&\dots&u_{N-L}\\
u_1&u_2&\dots&u_{N-L+1}\\
\vdots&\vdots&\ddots&\vdots\\
u_{L-1}&u_L&\dots&u_{N-1}
\end{bmatrix}.
\end{align*}
Further, for integers $a,b$, we abbreviate a stacked window of the sequence as $u_{[a,b]}=\begin{bmatrix}u_a^\top&\dots&u_b^\top\end{bmatrix}^\top$.
Moreover, $u$ will denote either the sequence itself or the stacked vector $u_{[0,N-1]}$ containing all of its components.
We consider the following standard definition of persistence of excitation.
\begin{definition}\label{def:pe}
We say that a sequence $\{u_k\}_{k=0}^{N-1}$ with $u_k\in\mathbb{R}^m$ is persistently exciting of order $L$ if $\text{rank}(H_L(u))=mL$.
\end{definition}
In this paper, we propose a data-driven MPC scheme to control discrete-time LTI systems, using no model knowledge but only measured data, where the input generating the data is persistently exciting of a sufficiently high order.
We assume that the system order, denoted by $n$, is known, but our theoretical results remain true if $n$ is replaced by an upper bound.
Further, we write $m$ and $p$ for the number of inputs and outputs, and we assume that the input-output behavior of the system can be explained via a (controllable and observable) minimal realization.

\begin{definition}\label{def:traj}
We say that an input-output sequence $\{u_k,y_k\}_{k=0}^{N-1}$ is a trajectory of an LTI system $G$, if there exists an initial condition $\bar{x}\in\mathbb{R}^n$ as well as a state sequence $\{x_k\}_{k=0}^N$ such that
\begin{align*}
x_{k+1}&=Ax_k+Bu_k,\>\>x_0=\bar{x},\\
y_k&=Cx_k+Du_k,
\end{align*}
for $k=0,\dots,N-1$, where $(A,B,C,D)$ is a minimal realization of $G$.
\end{definition}

The following result, originally proven in the context of behavioral systems theory~\cite{Willems05}, shows that a single trajectory of an unknown LTI system can be used to parametrize all trajectories of the system if the input is persistently exciting.

\begin{theorem}[\cite{Berberich20a}]\label{thm:hankel}
Suppose $\{u_k^d,y_k^d\}_{k=0}^{N-1}$ is a trajectory of an LTI system $G$, where $u^d$ is persistently exciting of order $L+n$.
Then, $\{\bar{u}_k,\bar{y}_k\}_{k=0}^{L-1}$ is a trajectory of $G$ if and only if there exists $\alpha\in\mathbb{R}^{N-L+1}$ such that
\begin{align}\label{eq:thm_hankel}
\begin{bmatrix}H_L(u^d)\\H_L(y^d)\end{bmatrix}\alpha=\begin{bmatrix}\bar{u}\\\bar{y}
\end{bmatrix}.
\end{align}
\end{theorem}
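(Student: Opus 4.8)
The plan is to prove the two implications separately; the forward (``if'') implication is routine, and I would dispatch it first by using that the set $\mathcal{B}_L$ of all length-$L$ trajectories of $G$ is a linear subspace of $\mathbb{R}^{(m+p)L}$ — the matrices $(A,B,C,D)$ act linearly and the initial condition $\bar x$ in Definition~\ref{def:traj} ranges over $\mathbb{R}^n$. If $\{x_k^d\}_{k=0}^{N}$ is a state sequence for the data trajectory, then for each $j\in\mathbb{I}_{[0,N-L]}$ the $j$-th column $\begin{bmatrix}u^d_{[j,j+L-1]}\\ y^d_{[j,j+L-1]}\end{bmatrix}$ of $\begin{bmatrix}H_L(u^d)\\ H_L(y^d)\end{bmatrix}$ is itself an element of $\mathcal{B}_L$, namely the trajectory with initial condition $x_j^d$. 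Hence every linear combination $\begin{bmatrix}H_L(u^d)\\ H_L(y^d)\end{bmatrix}\alpha$ lies in $\mathcal{B}_L$, which settles ``if''.

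For ``only if'' I would move to a state-space description and factor the Hankel matrix through the state. Let $\Gamma_L:=\begin{bmatrix}C^\top&(CA)^\top&\cdots&(CA^{L-1})^\top\end{bmatrix}^\top$ be the extended observability matrix, $T_L$ the block lower-triangular Toeplitz matrix formed from $D$ and the Markov parameters, and $X:=\begin{bmatrix}x_0^d&\cdots&x_{N-L}^d\end{bmatrix}$. The output equation of the realization then gives
\begin{align*}
\begin{bmatrix}H_L(u^d)\\ H_L(y^d)\end{bmatrix}=\begin{bmatrix}I_{mL}&0\\ T_L&\Gamma_L\end{bmatrix}\begin{bmatrix}H_L(u^d)\\ X\end{bmatrix},
\end{align*}
and likewise $\begin{bmatrix}\bar u\\ \bar y\end{bmatrix}=\begin{bmatrix}I_{mL}&0\\ T_L&\Gamma_L\end{bmatrix}\begin{bmatrix}\bar u\\ \bar x\end{bmatrix}$ for any length-$L$ trajectory $\{\bar u_k,\bar y_k\}$ with initial condition $\bar x$. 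So it suffices to show that $\begin{bmatrix}H_L(u^d)\\ X\end{bmatrix}$ has full row rank $mL+n$: its columns then span $\mathbb{R}^{mL+n}$, so there is an $\alpha$ with $\begin{bmatrix}H_L(u^d)\\ X\end{bmatrix}\alpha=\begin{bmatrix}\bar u\\ \bar x\end{bmatrix}$, and premultiplying by $\begin{bmatrix}I_{mL}&0\\ T_L&\Gamma_L\end{bmatrix}$ yields exactly~\eqref{eq:thm_hankel}.

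The rank claim $\text{rank}\begin{bmatrix}H_L(u^d)\\ X\end{bmatrix}=mL+n$ is the heart of the matter and the step I expect to be the main obstacle; it is the state-space form of the fundamental lemma of~\cite{Willems05}, and for the version that makes the state sequence explicit I would follow~\cite{Berberich20a}. I would argue by contradiction: a nonzero left-kernel vector $\begin{bmatrix}\eta^\top&\xi^\top\end{bmatrix}$ produces, for every shift $j\in\mathbb{I}_{[0,N-L]}$, the scalar identity $\sum_{i=0}^{L-1}\eta_i^\top u^d_{j+i}+\xi^\top x_j^d=0$, relating a finite window of inputs to the current state. Substituting the reachability identity $x^d_{j}=A^{n}x^d_{j-n}+\begin{bmatrix}A^{n-1}B&\cdots&AB&B\end{bmatrix}u^d_{[j-n,j-1]}$, whose matrix has rank $n$ by controllability, turns this into a linear relation on a length-$(L+n)$ input window plus a single state; persistence of excitation of order $L+n$ forces $H_{L+n}(u^d)$ to have full row rank $m(L+n)$ — in particular guaranteeing that $N$ is large enough for $\begin{bmatrix}H_L(u^d)\\ X\end{bmatrix}$ to have at least $mL+n$ columns — and, together with reachability of the state, this drives $\eta$ and $\xi$ to zero, the desired contradiction. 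Two routine points I would make explicit: the reachability identity needs $j\geq n$, so the columns with $j<n$ are handled via $x_0^d$ directly; and, although not needed for the implication above, claiming that the column span of the Hankel matrix is \emph{all} of $\mathcal{B}_L$ additionally uses $\text{rank}(\Gamma_L)=n$, which holds by observability once $L$ is at least the observability index, in particular for $L\geq n$.
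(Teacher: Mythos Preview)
The paper does not prove Theorem~\ref{thm:hankel}; it is quoted verbatim from~\cite{Berberich20a} (with origin in~\cite{Willems05}) and used as a black box, so there is no in-paper argument to compare your proposal against. What you outline is precisely the standard state-space proof that appears in those references: the ``if'' direction via linearity of the length-$L$ behavior is correct, and for ``only if'' the factorization through $\begin{bmatrix}H_L(u^d)\\X\end{bmatrix}$ together with the rank claim $\mathrm{rank}\begin{bmatrix}H_L(u^d)\\X\end{bmatrix}=mL+n$ is exactly the right reduction.

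One point to sharpen in your rank sketch: after the reachability substitution you obtain, for each shift $k$, a relation of the form $\mu^\top u^d_{[k,k+L+n-1]}+\xi^\top A^n x^d_k=0$ with $\mu^\top=\begin{bmatrix}\xi^\top\mathcal{C}_n&\eta^\top\end{bmatrix}$, which \emph{still} carries a state term. Full row rank of $H_{L+n}(u^d)$ alone does not annihilate this, and invoking ``reachability of the state'' here is circular (it is the same left-kernel problem with $L$ replaced by $L+n$). The standard fix---and what you will find when you actually follow~\cite{Berberich20a}---is to take the characteristic-polynomial combination of the $n{+}1$ shifted relations so that, by Cayley--Hamilton, the state contributions $\xi^\top A^i x^d_k$ cancel identically; what remains is a pure input relation of length $L+n$, persistence of excitation forces its coefficient vector to zero, and unwinding those coefficients yields $\eta=0$ and $\xi^\top\begin{bmatrix}B&AB&\cdots&A^{n-1}B\end{bmatrix}=0$, hence $\xi=0$ by controllability. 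Your phrase ``plus a single state \dots drives $\eta$ and $\xi$ to zero'' elides exactly this step; once you insert it, the argument is complete.
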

In this paper, we employ Theorem~\ref{thm:hankel} to set up a data-driven MPC scheme which practically exponentially stabilizes an input-output setpoint $(u^s,y^s)$ while the closed-loop trajectory satisfies pointwise-in-time input and output constraints $u_t\in\mathbb{U}$ and $y_t\in\mathbb{Y}$.
The setpoint $(u^s,y^s)$ is required to be an equilibrium of the unknown LTI system in the sense that the sequence $\{\bar{u}_k,\bar{y}_k\}_{k=0}^{{\color{blue}n}}$ with $(\bar{u}_k,\bar{y}_k)=(u^s,y^s)$ for all $k\in\mathbb{I}_{[0,{\color{blue}n}]}$ is a trajectory of $G$.
We abbreviate $u_n^s$ and $y_n^s$ as the column vectors containing $n$ times the input and output setpoint, respectively, and we assume $(u^s,y^s)\in\text{int}(\mathbb{U}\times\mathbb{Y})$.
Further, the constraint set $\mathbb{U}$ is assumed to be a compact convex polytope.
For simplicity, we consider hyperbox constraints on the output, which are w.l.o.g. assumed to be symmetric w.r.t. zero, i.e., $\mathbb{Y}=\left\{y\in\mathbb{R}^p\mid \lVert y\rVert_\infty\leq y_{\max}\right\}$, where $y_{\max}>0$.
It is straightforward to extend the results of this paper to constraints with different bounds $y_{\max}^i$, $i=1,\dots,p$, for each component of the output, e.g., by scaling all output measurements by $\frac{y_{\max}^i}{y_{\max}}$.
Moreover, we conjecture that similar results can be obtained for general polytopic output constraints.

\section{Robust data-driven MPC}\label{sec:MPC}

In this section, we propose a data-driven MPC scheme which uses Theorem~\ref{thm:hankel} to predict future trajectories based on a single noisy data trajectory.
In order to guarantee output constraint satisfaction despite the noisy data, the scheme uses a constraint tightening which ensures that the closed-loop output satisfies the desired constraints $y_t\in\mathbb{Y}$ for all $t$. 
We first present the MPC scheme in Section~\ref{sec:MPC_scheme} and thereafter, in Section~\ref{sec:MPC_constraints}, we describe the constraint tightening in more detail.

\subsection{Proposed MPC scheme}\label{sec:MPC_scheme}
In this paper, we consider the scenario that the output measurements are affected by additive noise (i) in the initial data used for prediction via Theorem~\ref{thm:hankel}, i.e., $\tilde{y}_k^d=y_k^d+\varepsilon_k^d$, similar to parametric uncertainty in model-based MPC, and (ii) in the online measurements used for feedback, i.e., $\tilde{y}_k=y_k+\varepsilon_k$, where the noise is bounded as 
$\lVert\varepsilon_k^d\rVert_\infty\leq\bar{\varepsilon}$, $\lVert\varepsilon_k\rVert_\infty\leq\bar{\varepsilon}$, for some constant $\bar{\varepsilon}>0$.
Clearly, if the output trajectory $y^d$ in Theorem~\ref{thm:hankel} is affected by noise, then the theorem statement is invalid and hence, it is non-trivial to provide guarantees for a data-driven MPC scheme based on Theorem~\ref{thm:hankel} with noisy data.
The same noise setting was considered in~\cite{Berberich19b}, where recursive feasibility and practical stability of a data-driven MPC scheme were proven for the case of no output constraints, i.e., $\mathbb{Y}=\mathbb{R}^p$.
In this paper, we enhance the scheme of~\cite{Berberich19b} by a constraint tightening which guarantees closed-loop output constraint satisfaction.
To ensure stability of the setpoint $(u^s,y^s)$, we consider the stage cost
\begin{align*}
\ell(u,y)\coloneqq\lVert u-u^s\rVert_R^2+\lVert y-y^s\rVert_Q^2,
\end{align*}
with weighting matrices $Q,R\succ0$.
To this end, given a noisy data trajectory $\{u_k^d,\tilde{y}_k^d\}_{k=0}^{N-1}$, past $n$ input-output measurements $(u_{[t-n,t-1]},\tilde{y}_{[t-n,t-1]})$ to specify initial conditions, as well as a desired input-output setpoint $(u^s,y^s)$, we define the following data-driven MPC scheme.
\begin{subequations}\label{eq:MPC}
\begin{align}\nonumber
J_L^*&\big(u_{[t-n,t-1]},\tilde{y}_{[t-n,t-1]}\big)=\\\nonumber
\underset{\substack{\alpha(t),\sigma(t)\\\bar{u}(t),\bar{y}(t)}}{\min}&\sum_{k=0}^{L-1}\ell\left(\bar{u}_k(t),\bar{y}_k(t)\right)+\lambda_\alpha\bar{\varepsilon}\lVert\alpha(t)\rVert_2^2+\lambda_\sigma\lVert\sigma(t)\rVert_2^2\\
\label{eq:MPC1} s.t.\>\> &\>\begin{bmatrix}
\bar{u}(t)\\\bar{y}(t)+\sigma(t)\end{bmatrix}=\begin{bmatrix}H_{L+n}\left(u^d\right)\\H_{L+n}\left(\tilde{y}^d\right)\end{bmatrix}\alpha(t),\\\label{eq:MPC2}
&\>\begin{bmatrix}\bar{u}_{[-n,-1]}(t)\\\bar{y}_{[-n,-1]}(t)\end{bmatrix}=\begin{bmatrix}u_{[t-n,t-1]}\\\tilde{y}_{[t-n,t-1]}\end{bmatrix},\\\label{eq:MPC3}
&\>\begin{bmatrix}\bar{u}_{[L-n,L-1]}(t)\\\bar{y}_{[L-n,L-1]}(t)\end{bmatrix}=\begin{bmatrix}u^s_n\\y^s_n\end{bmatrix},\>\>\bar{u}_k(t)\in\mathbb{U},\\
%\label{eq:MPC4}
%&\>\bar{y}_k(t)\in\mathbb{Y}_k,\>\>k\in\mathbb{I}_{[0,L-1]},\\
\label{eq:MPC5}
&\>\>\lVert\sigma(t)\rVert_\infty\leq\bar{\varepsilon}\left(1+\lVert\alpha(t)\rVert_1\right),\\\label{eq:MPC6}
&\>\>\lVert\bar{y}_k(t)\rVert_\infty+a_{1,k}\lVert\bar{u}(t)\rVert_1+a_{2,k}\lVert\alpha(t)\rVert_1\\\nonumber
&\quad+a_{3,k}\lVert\sigma(t)\rVert_\infty+a_{4,k}\leq y_{\max},\>\>k\in\mathbb{I}_{[0,L-n-1]}.
\end{align}
\end{subequations}
As in previous works on data-driven MPC, Problem~\eqref{eq:MPC} relies on a Hankel matrix constraint~\eqref{eq:MPC1} in order to predict future system trajectories based on Theorem~\ref{thm:hankel}.
Further,~\eqref{eq:MPC2} ensures that the initial conditions are correctly initialized whereas~\eqref{eq:MPC3} is a terminal equality constraint on the extended (non-minimal) state 
\begin{align}\label{eq:extended_state}
\xi_t=\begin{bmatrix}u_{[t-n,t-1]}\\y_{[t-n,t-1]}\end{bmatrix},
\end{align}
similar to terminal equality constraints in model-based MPC~\cite{Rawlings17}.
In order to obtain recursive feasibility and stability guarantees despite noisy data, Problem~\eqref{eq:MPC} contains an $\ell_2$-regularization of the variable $\alpha(t)\in\mathbb{R}^{N-L-n+1}$ in the cost, as well as an additional slack variable $\sigma(t)\in\mathbb{R}^{p(L+n)}$ which renders the equality constraints feasible and whose norm is also penalized in the cost.
Note that, except for the constraint~\eqref{eq:MPC5}, all constraints in Problem~\eqref{eq:MPC} can be written as linear equality and inequality constraints.
As is discussed in more detail in~\cite{Berberich19b}, the non-convex constraint~\eqref{eq:MPC5} can be dropped in practice if $\lambda_\sigma$ is large enough, retaining the same theoretical guarantees.
In this case, Problem~\eqref{eq:MPC} is a strictly convex quadratic program of size similar to model-based MPC (compare~\cite{Berberich19b} for details) and can thus be solved efficiently.

Finally, Problem~\eqref{eq:MPC} contains the constraint tightening~\eqref{eq:MPC6} which ensures closed-loop output constraint satisfaction despite noisy measurements.
As will become clear later in the paper, the output prediction error induced by the noise depends on the size of the quantities $\bar{u}(t),\alpha(t),\sigma(t)$, which explains their occurrences in the constraint tightening.
Appropriate definitions of the constants $a_{i,k}$ to guarantee recursive feasibility and closed-loop constraint satisfaction are provided in Section~\ref{sec:MPC_constraints} (to be more precise, in Equations~\eqref{eq:constraint_tightening1} and~\eqref{eq:constraint_tightening2}).
Problem~\eqref{eq:MPC} is solved in an $n$-step fashion, as defined in Algorithm~\ref{alg:MPC_n_step}.

\begin{algorithm}
\begin{Algorithm}\label{alg:MPC_n_step}
\normalfont{\textbf{$n$-Step Data-Driven MPC Scheme}}
\begin{enumerate}
\item At time $t$, take the past $n$ measurements $u_{[t-n,t-1]}$, $\tilde{y}_{[t-n,t-1]}$ and solve~\eqref{eq:MPC}.
\item Apply the input sequence $u_{[t,t+n-1]}=\bar{u}_{[0,n-1]}^*(t)$ over the next $n$ time steps.
\item Set $t=t+n$ and go back to 1).
\end{enumerate}
\end{Algorithm}
\end{algorithm}
For the present MPC approach, considering a multi-step scheme instead of a standard ($1$-step) MPC scheme leads to stronger theoretical closed-loop guarantees, due to the joint occurence of noisy data in the prediction model~\eqref{eq:MPC1} and terminal equality constraints.
In particular, the closed-loop properties of the corresponding $1$-step MPC scheme (when appropriately modifying the constraints~\eqref{eq:MPC6}) hold only locally around the origin (compare~\cite[Remark 4]{Berberich19b} for details).
Throughout the paper, we denote the optimal solution of Problem~\eqref{eq:MPC} at time $t$ by $\bar{u}^*(t),\bar{y}^*(t),\alpha^*(t),\sigma^*(t)$.
Closed-loop values of the input and output at time $t$ are denoted by $u_t$ and $y_t$, respectively.

\subsection{Constraint tightening}\label{sec:MPC_constraints}

In this section, we define the constants $a_{i,k}$ involved in the constraint tightening~\eqref{eq:MPC6}, based on data and system parameters.
First, we assume knowledge of a controllability constant $\Gamma>0$ such that for any initial trajectory $\{u_k,y_k\}_{k=-n}^{n-1}$ with $u_{[0,n-1]}=0$, there exists an input-output trajectory $\{\bar{u}_k,\bar{y}_k\}_{k=-n}^{2n-1}$ such that
\begin{align}
(\bar{u}_{[-n,-1]},\bar{y}_{[-n,-1]})&=(u_{[-n,-1]},y_{[-n,-1]}),\\
(\bar{u}_{[n,2n-1]},\bar{y}_{[n,2n-1]})&=(0,0),\\\label{eq:ass_ctrb}
\lVert \bar{u}_{[0,n-1]}\rVert_1&\leq\Gamma\left\lVert y_{[0,n-1]}\right\rVert_\infty.
\end{align}
Essentially, this means that, for any given initial input-output trajectory $\{u_k,y_k\}_{k=-n}^{-1}$, there exists an input that can be appended to this trajectory which steers the system to zero in $n$ steps.
Moreover, the norm of this input is bounded from above by the norm of the output that would result from the same initial trajectory when applying a zero input.
Since we consider only systems whose input-output behavior can be explained by a minimal realization (compare Definition~\ref{def:traj}), the system is controllable.
Therefore, the above condition is always satisfied for \emph{some} $\Gamma>0$.
Further, we assume that there exist (known) constants $\rho_{k},k\in\mathbb{I}_{[n,L+n-1]},$ defined as
\begin{align}\nonumber
\rho_k\>\>=\quad&\max_{y\in\mathbb{R}^{(k+1)p}}\>\lVert y_k\rVert_{\infty}\\\label{eq:ass_obsv}
\text{s.t.}\>\>&\lVert y_{[0,n-1]}\rVert_\infty=1,\\\nonumber
&\{0,y_j\}_{j=0}^k\>\text{is a trajectory of $G$}.
\end{align}
The constants $\rho_k$ are related to observability of the underlying system.
According to its definition, $\rho_k$ is equal to the norm of the output of the unknown LTI system $G$ at time $k$, assuming that the input is zero and the initial output is norm-bounded by $1$.
In Section~\ref{sec:system_constants}, we describe how $\Gamma$ as well as $\rho_k$ can be estimated from measured data.
Furthermore, since the input and output constraint sets $\mathbb{U},\mathbb{Y}$ are compact, we can define 
\begin{align}\label{eq:xi_max_def}
\xi_{\max}\coloneqq\max_{\xi\in\mathbb{U}^n\times\mathbb{Y}^n}\lVert\xi\rVert_1.
\end{align}
Finally, we define
\begin{align}\label{eq:hankel_xi}
H_{u\xi}=\begin{bmatrix}H_{L+n}(u^d)\\H_1\left(\xi^d_{[0,N-L-n]}\right)\end{bmatrix},
\end{align}
where $\xi^d$ is the extended state\footnote{In order to construct this state at times $0$ through $n-1$, we require $n$ additional data points $\{u_k^d,y_k^d\}_{k=-n}^{-1}$.
For notational simplicity, we neglect this fact throughout the paper and simply assume that $N+n$ overall data points are available, the first $n$ of which are only used to construct $\{\xi_k^d\}_{k=0}^{n-1}$.} as defined in~\eqref{eq:extended_state}, corresponding to the measured data $\{u_k^d,y_k^d\}_{k=0}^{N-1}$.
We define the constant $c_{pe}\coloneqq\lVert H_{u\xi}^\dagger\rVert_1$, where $H_{u\xi}^\dagger$ is the Moore-Penrose inverse of $H_{u\xi}$.
Note that $H_{u\xi}$ does in general not have full row rank, even if the input is persistently exciting, since the state-space model with state $\xi$ is typically not controllable\footnote{According to~\cite[Corollary 2]{Willems05}, both persistence of excitation as well as controllability are required to ensure that $H_{u\xi}$ has full row rank.} (only under strict conditions, e.g., single-input single-output systems with exactly known order, compare~\cite{Persis19},~\cite[Lemma 3.4.7]{Goodwin14}).
Nevertheless, if the input is persistently exciting of order $L+2n$, then for any desired input $\{u_k\}_{k=0}^{L-1}$ and extended state $\xi_0$, there exists $\alpha\in\mathbb{R}^{N-L-n+1}$ such that
\begin{align}\label{eq:Huxi}
H_{u\xi}\alpha=\begin{bmatrix}u\\\xi_0\end{bmatrix},
\end{align}
compare~\cite{Markovsky08} for details.
A related constant $\lVert H_{ux}^\dagger\rVert_2^2$, where $x$ is the state in some minimal realization, was extensively studied in~\cite{Berberich19b}.
In this paper, we consider the matrix $H_{u\xi}$ instead of $H_{ux}$ since the extended state $\xi$ depends only on input-output measurements and a minimal state $x$ is not available in the present setting.
The constant $c_{pe}$ can be rendered arbitrarily small by choosing the norm of a sufficiently rich persistently exciting input $u^d$ sufficiently large.
In~\cite{Berberich19b}, in the absence of output constraints, it was shown that a small value of $c_{pe}$ corresponds to a large region of attraction and a small tracking error for the closed loop under the data-driven MPC scheme~\eqref{eq:MPC}.
Similarly, we will see in our main theoretical results that a small constant $c_{pe}$ reduces the conservatism of the proposed output constraint tightening.

Now, we are in the position to state the coefficients $a_{i,k}$, $i=1,\dots,4$, in~\eqref{eq:MPC6}.
For the first $n$ steps $k\in\mathbb{I}_{[0,n-1]}$, define
\begin{align}\label{eq:constraint_tightening1}
\begin{split}
&a_{1,k}=0,\>\>a_{2,k}=\bar{\varepsilon}a_{3,k},\\
&a_{3,k}=1+\rho_{n}^{\max},\>\>a_{4,k}=\bar{\varepsilon}\rho_{n}^{\max},
\end{split}
\end{align}
where $\rho_n^{\max}\coloneqq\max_{k\in\mathbb{I}_{[0,n-1]}}\rho_{n+k}$.
For $k\in\mathbb{I}_{[0,L-2n-1]}$, the coefficients $a_{i,k+n}$ are defined recursively as
\begin{align}\nonumber
a_{1,k+n}&=a_{1,k}+(a_{2,k}+a_{3,k}\bar{\varepsilon})c_{pe},\\\label{eq:constraint_tightening2}
a_{2,k+n}&=\bar{\varepsilon}a_{3,k+n},\\\nonumber
a_{3,k+n}&=1+\rho_{2n+k}+\Gamma(1+\rho_{L}^{\max})a_{1,k+n},\\\nonumber
a_{4,k+n}&=a_{4,k}+\bar{\varepsilon}\rho_{2n+k}+\bar{\varepsilon}a_{1,k+n}\Gamma\rho_{L}^{\max}\\\nonumber
&\quad+\bar{\varepsilon}a_{3,k}+(a_{2,k}+a_{3,k}\bar{\varepsilon})c_{pe}\xi_{\max},
\end{align}
where $\rho_L^{\max}\coloneqq\max_{k\in\mathbb{I}_{[0,n-1]}}\rho_{L+k}$.
According to the above definition, $a_{2,k}=\bar{\varepsilon}a_{3,k}$ for all $k\in\mathbb{I}_{[0,L-n-1]}$ and hence, $a_{2,k}$ is arbitrarily small if $\bar{\varepsilon}$ is arbitrarily small.
This implies that the same holds true for $a_{1,k}$ and $a_{4,k}$ and hence, $a_{3,k}$ becomes arbitrarily close to $1+\rho_{n+k}$ for $k\geq n$.
Thus, except for $a_{3,k}$, all constants involved in the constraint tightening can be rendered arbitrarily small for sufficiently small noise levels.
Finally, the slack variable $\sigma(t)$ becomes arbitrarily small for small noise levels due to the constraint~\eqref{eq:MPC5} and hence, the tightened constraints~\eqref{eq:MPC6} recover the nominal output constraints $\lVert\bar{y}_k(t)\rVert_\infty\leq y_{\max}$ if the noise level tends to zero.

In order to implement the MPC scheme, the coefficients $a_{i,k}$ need to be computed according to the above recursion.
This requires knowledge of the constants $\rho_{k},\Gamma,\xi_{\max},c_{pe}$.
Note that $\xi_{\max}$ can be computed directly using only the definition of the constraints $\mathbb{U},\mathbb{Y}$.
Similarly, using~\eqref{eq:hankel_xi}, the constant $c_{pe}$ can be (approximately) computed based on the data $\{u_k^d,\tilde{y}_k^d\}_{k=0}^{N-1}$, i.e.,
\begin{align*}
c_{pe}\approx\lVert H_{u\tilde{\xi}}\rVert_1,
\end{align*}
with $\tilde{\xi}_k=\begin{bmatrix}u_{[k-n,k-1]}^\top&\tilde{y}_{[k-n,k-1]}^\top\end{bmatrix}^\top$.
On the other hand, the quantities $\rho_{k}$ and $\Gamma$ cannot be computed without additional model knowledge.
In Section~\ref{sec:system_constants}, we propose methods to estimate these constants from measured data based on Theorem~\ref{thm:hankel}.

\section{Theoretical guarantees}\label{sec:theory}

In this section, we provide a theoretical analysis of closed-loop properties of the proposed MPC scheme with the constraint tightening described in Section~\ref{sec:MPC_constraints}.
First, we state the main assumptions in Section~\ref{sec:theory_ass} and thereafter, we prove recursive feasibility as well as closed-loop stability and constraint satisfaction in Sections~\ref{sec:theory_feas} and~\ref{sec:theory_stab}.

\subsection{Assumptions}\label{sec:theory_ass}

In the remainder of the paper, we consider the case $(u^s,y^s)=(0,0)$, i.e., the desired setpoint is the origin, and we comment on setpoints $(u^s,y^s)\neq(0,0)$ in Remark~\ref{rk:setpoint0}.
We make the assumption that the input generating the initial data is sufficiently rich in the following sense.

\begin{assumption}\label{ass:pe}
The input $u^d$ of the data trajectory is persistently exciting of order $L+2n$.
\end{assumption}

Note that Assumption~\ref{ass:pe} requires persistence of excitation of order $L+2n$, although Theorem~\ref{thm:hankel} assumes only an order of $L+n$.
This is due to the fact that the length of the trajectories predicted via Problem~\eqref{eq:MPC} is $L+n$, due to the initial trajectory in~\eqref{eq:MPC2}.
Moreover, we require that the prediction horizon $L$ is at least twice as long as the system order.

\begin{assumption}\label{ass:L_geq_2n}
The prediction horizon satisfies $L\geq2n$.
\end{assumption}

In order to provide theoretical guarantees on closed-loop output constraint satisfaction, we require information on the prediction error induced by the additive noise in the ``prediction model''~\eqref{eq:MPC1}, i.e., on the difference between the predicted solution $\bar{y}^*(t)$ and the actual trajectory $\hat{y}_{[t,t+L-1]}$ which would result from an (open-loop) application of $\bar{u}^*(t)$.
To be more precise, $\hat{y}$ is defined as
\begin{align}\label{eq:y_hat_def}
\hat{y}_{t+k}=CA^kx_t+\sum_{j=0}^{k-1}CA^{k-1-j}B\bar{u}^*_j(t)+D\bar{u}^*_k(t),
\end{align}
where $x$ is the state in some minimal realization $(A,B,C,D)$.
In~\cite{Berberich19b}, the following bound on this difference is derived for the proposed MPC scheme.
\begin{lemma}\label{lem:pred_error}
\cite[Lemma 2]{Berberich19b} If~\eqref{eq:MPC} is feasible at time $t$, then the following holds for all $k\in\mathbb{I}_{[0,L-1]}$
\begin{align}\label{eq:lem_pred_error}
\begin{split}
\lVert\hat{y}_{t+k}-\bar{y}_k^*(t)\rVert_\infty\leq&\bar{\varepsilon}\rho_{n+k}+\bar{\varepsilon}(1+\rho_{n+k})\lVert\alpha^*(t)\rVert_1\\
&+(1+\rho_{n+k})\lVert\sigma^*(t)\rVert_\infty.
\end{split}
\end{align}
\end{lemma}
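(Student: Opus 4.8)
The goal is to bound $\lVert\hat{y}_{t+k}-\bar{y}_k^*(t)\rVert_\infty$, where $\hat{y}$ is the true (noise-free) open-loop response to the optimal input $\bar u^*(t)$ from the true state $x_t$, and $\bar y_k^*(t)$ is the corresponding MPC prediction. The plan is to decompose this error into two parts: first, the difference between $\bar y^*(t)$ and the true response $y^{\bar x}$ of $G$ to the input $\bar u^*(t)$ started from the \emph{true} initial state $\bar x$ that is consistent with the noisy initial condition $\tilde y_{[t-n,t-1]}$; and second, the difference between $y^{\bar x}$ and $\hat y$, which is purely a difference in initial state and hence governed by the observability-type constants $\rho_k$. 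I would first invoke the constraint~\eqref{eq:MPC1} together with Theorem~\ref{thm:hankel}: the vector $\alpha^*(t)$ certifies that $(\bar u^*(t),\bar y^*(t)+\sigma^*(t))$ is, with the \emph{noise-free} Hankel matrix $H_{L+n}(y^d)$ replaced by the noisy one, a genuine trajectory of $G$. Writing $\begin{bmatrix}H_{L+n}(u^d)\\H_{L+n}(y^d)\end{bmatrix}\alpha^*(t)=\begin{bmatrix}\bar u^*(t)\\\bar y^*(t)+\sigma^*(t)-\delta(t)\end{bmatrix}$ where $\delta(t)\coloneqq H_{L+n}(\tilde y^d)\alpha^*(t)-H_{L+n}(y^d)\alpha^*(t)$ captures the effect of the data noise $\varepsilon^d$, we get that $(\bar u^*(t),\bar y^*(t)+\sigma^*(t)-\delta(t))$ \emph{is} a true trajectory of $G$, with $\lVert\delta(t)\rVert_\infty\le\bar\varepsilon\lVert\alpha^*(t)\rVert_1$ entrywise (each row of $H_{L+n}(\varepsilon^d)$ has entries bounded by $\bar\varepsilon$, so its product with $\alpha^*(t)$ is bounded by $\bar\varepsilon\lVert\alpha^*(t)\rVert_1$).

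Next I would identify the initial state $\bar x$ of this true trajectory. Its initial $n$-window $(\bar u_{[-n,-1]}^*(t),\bar y_{[-n,-1]}^*(t)+\sigma_{[-n,-1]}^*(t)-\delta_{[-n,-1]}(t))$ differs from the true past window $(u_{[t-n,t-1]},y_{[t-n,t-1]})$ only in the output, and by~\eqref{eq:MPC2} that output difference equals $\varepsilon_{[t-n,t-1]}+\sigma_{[-n,-1]}^*(t)-\delta_{[-n,-1]}(t)$, which is bounded in $\infty$-norm by $\bar\varepsilon+\lVert\sigma^*(t)\rVert_\infty+\bar\varepsilon\lVert\alpha^*(t)\rVert_1$. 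By linearity, $\hat y_{t+k}-(\bar y_k^*(t)+\sigma_k^*(t)-\delta_k(t))$ is exactly the free (zero-input) response of $G$ at time $k$ to the initial output-window perturbation above; by the definition~\eqref{eq:ass_obsv} of $\rho_{n+k}$ — which bounds the time-$k$ free output response of the $(k+1)$-step extension in terms of the $\infty$-norm of the length-$n$ initial output window — this free response is bounded by $\rho_{n+k}\big(\bar\varepsilon+\lVert\sigma^*(t)\rVert_\infty+\bar\varepsilon\lVert\alpha^*(t)\rVert_1\big)$. Adding back $\sigma_k^*(t)-\delta_k(t)$, whose $\infty$-norm is at most $\lVert\sigma^*(t)\rVert_\infty+\bar\varepsilon\lVert\alpha^*(t)\rVert_1$, and regrouping terms gives exactly the claimed bound $\bar\varepsilon\rho_{n+k}+\bar\varepsilon(1+\rho_{n+k})\lVert\alpha^*(t)\rVert_1+(1+\rho_{n+k})\lVert\sigma^*(t)\rVert_\infty$.

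The main obstacle I anticipate is the careful bookkeeping needed to justify that the relevant initial-condition perturbation is genuinely just the output window (the inputs match exactly by~\eqref{eq:MPC2}), and that the $\rho_k$ constants as defined in~\eqref{eq:ass_obsv} apply verbatim to this perturbation — in particular that the perturbed trajectory, being a difference of two trajectories of $G$, is itself a trajectory of $G$ with zero input on $[0,k]$, so that the definition of $\rho_{n+k}$ is directly applicable after normalizing by the $\infty$-norm of the initial window. A secondary technical point is the entrywise bound $\lVert H_{L+n}(\varepsilon^d)\alpha\rVert_\infty\le\bar\varepsilon\lVert\alpha\rVert_1$, which is a direct consequence of Hölder's inequality row-by-row. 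Since this lemma is quoted verbatim from \cite[Lemma 2]{Berberich19b}, I would present the argument at the level of this sketch and refer to that reference for the remaining routine estimates.
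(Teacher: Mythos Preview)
Your argument is correct and is essentially the same as the one underlying \cite[Lemma~2]{Berberich19b}, which the paper simply cites without reproving. The only cosmetic difference is that the cited reference phrases the observability step in state-space terms---bounding $\lVert CA^k\Phi^\dagger\rVert_\infty$, with $\Phi$ the observability matrix---whereas you invoke the trajectory-based definition~\eqref{eq:ass_obsv} of $\rho_{n+k}$ directly; the paper explicitly notes after the lemma that these two quantities coincide, so the two routes are identical in substance.
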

Lemma~\ref{lem:pred_error} shows that the prediction error depends not only on the noise bound $\bar{\varepsilon}$ and the slack variable $\sigma$, but also on the weighting vector $\alpha$.
As we will see in the recursive feasibility proof in Section~\ref{sec:theory_feas}, the candidate solution for $\alpha$ in turn depends on the optimal input $\bar{u}^*(t)$, which thus explains why all of these variables are included in the tightened constraints~\eqref{eq:MPC6}.
In~\cite{Berberich19b}, it is shown that~\eqref{eq:lem_pred_error} holds as long as $\rho_k$ is an upper bound on $\lVert CA^k\Phi^\dagger\rVert_\infty$ for $k\in\mathbb{I}_{[n,L+n-1]}$, where $\Phi$ denotes the observability matrix $\Phi=\begin{bmatrix}C^\top&A^\top C^\top&\dots&(A^{n-1})^\top C^\top\end{bmatrix}^\top$.
It is not difficult to see that, by definition of the induced matrix norm, $\rho_k$ as defined in~\eqref{eq:ass_obsv} is equal to $\lVert CA^k\Phi^\dagger\rVert_\infty$, which thus implies that~\eqref{eq:lem_pred_error} holds.

\subsection{Recursive feasibility}\label{sec:theory_feas}
In this section, we prove that, if the proposed data-driven MPC scheme is feasible at time $t$, then it is feasible at time $t+n$, provided that the noise bound $\bar{\varepsilon}$ is sufficiently small.
While this was proven in~\cite{Berberich19b} for Problem~\eqref{eq:MPC} without any constraints on the output, i.e., without~\eqref{eq:MPC6}, it is an essential contribution of the present paper to prove the same recursive feasibility property for the scheme with tightened output constraints~\eqref{eq:MPC6}.
First, let us define an input-output-to-state-stability (IOSS) Lyapunov function $W(\xi)=\lVert \xi\rVert_P^2$ with $P\succ0$, where $\xi$ is the extended state defined in~\eqref{eq:extended_state} corresponding to the (detectable) system realization $(\tilde{A},\tilde{B},\tilde{C},\tilde{D})$, such that
\begin{align*}
W(\tilde{A}\xi+\tilde{B}u)-W(\xi)\leq-\frac{1}{2}\lVert \xi\rVert_2^2+c_1\lVert u\rVert_2^2+c_2\lVert y\rVert_2^2,
\end{align*}
for all $u,\xi,$ and $y=\tilde{C}\xi+\tilde{D}u$ (compare~\cite{cai2008input}).
As a Lyapunov function candidate for the closed loop under the proposed MPC scheme we consider the sum of the optimal cost of~\eqref{eq:MPC} and an IOSS Lyapunov function $W$, i.e.,
\begin{align*}
V_t\coloneqq J_L^*(u_{[t-n,t-1]},\tilde{y}_{[t-n,t-1]})+\gamma W(\xi_t),
\end{align*}
for some $\gamma>0$.

\begin{proposition}\label{prop:rec_feas}
Suppose Assumptions~\ref{ass:pe} and~\ref{ass:L_geq_2n} hold.
Then, for any $V_{ROA}>0$, there exists an $\bar{\varepsilon}_0>0$ such that for all $\bar{\varepsilon}\leq\bar{\varepsilon}_0$, if $V_t\leq V_{ROA}$ for some $t\geq0$, then Problem~\eqref{eq:MPC} is feasible at time $t+n$ for the resulting closed loop.
\end{proposition}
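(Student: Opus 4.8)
The strategy follows the standard MPC recursive feasibility template: from the optimal solution at time $t$, construct an explicit candidate solution at time $t+n$ and verify it satisfies all constraints of Problem~\eqref{eq:MPC}, in particular the tightened output constraints~\eqref{eq:MPC6}. First I would fix $V_{ROA}>0$; since $V_t\le V_{ROA}$ bounds the optimal cost, it bounds $\lVert\alpha^*(t)\rVert_2$, $\lVert\sigma^*(t)\rVert_2$, and also $\lVert\bar u^*(t)\rVert_1$ and $\lVert\bar y^*(t)\rVert_\infty$ via the stage cost and the constraint sets; these uniform bounds are what let me pick $\bar\varepsilon_0$ at the end. The candidate input at $t+n$ is obtained by shifting: take $\bar u^*_{[n,L-1]}(t)$ for the first $L-n$ steps, then append an input that steers the extended state back to the terminal equilibrium. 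Here I would invoke the controllability constant $\Gamma$: after applying $\bar u^*_{[0,n-1]}(t)$ to the true plant, the closed-loop state $\xi_{t+n}$ differs from the predicted one only by a noise-induced error (bounded via Lemma~\ref{lem:pred_error}), and~\eqref{eq:ass_ctrb} guarantees an appended input of $\ell_1$-norm at most $\Gamma$ times the relevant output deviation, driving the state to zero within $n$ further steps so that~\eqref{eq:MPC3} holds. The candidate $\alpha(t+n)$ is then defined via~\eqref{eq:Huxi} using $H_{u\xi}^\dagger$ applied to the stacked candidate input and the measured initial state $\xi_{t+n}$, which controls $\lVert\alpha(t+n)\rVert_1$ by $c_{pe}$ times (norm of candidate input plus $\xi_{\max}$); the candidate $\sigma(t+n)$ is the residual in~\eqref{eq:MPC1}, bounded so that~\eqref{eq:MPC5} holds.

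The core of the argument is verifying~\eqref{eq:MPC6} for the candidate at $t+n$. The key identity is that the candidate predicted output $\bar y_k(t+n)$ for $k\in\mathbb{I}_{[0,L-2n-1]}$ equals the shifted optimal prediction $\bar y^*_{k+n}(t)$ up to an error coming from (i) the difference between $\hat y_{t+n+k}$ and $\bar y^*_{k+n}(t)$, bounded by Lemma~\ref{lem:pred_error}, (ii) the difference between the true output under the closed loop and $\hat y$ over the first $n$ steps, and (iii) the slack $\sigma(t+n)$ from re-expressing the trajectory through the Hankel matrices. Tracking these error terms through the horizon is exactly what the recursion~\eqref{eq:constraint_tightening2} is designed to absorb: the term $a_{1,k+n}\lVert\bar u(t+n)\rVert_1$ accounts for the accumulated effect of re-expressing via $H_{u\xi}$ (hence the $c_{pe}$ factor), $a_{3,k}$ the per-step observability amplification $\rho$, and $a_{4,k}$ the accumulated additive noise terms. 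I would prove by induction on the block index $k$ that if the constraint~\eqref{eq:MPC6} held at time $t$ for step $k+n$ with coefficients $a_{i,k+n}$, then it holds at time $t+n$ for step $k$ with coefficients $a_{i,k}$, using~\eqref{eq:constraint_tightening1}–\eqref{eq:constraint_tightening2}; the base case $k\in\mathbb{I}_{[0,n-1]}$ uses the bound on the appended steering input and~\eqref{eq:constraint_tightening1}.

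The main obstacle I anticipate is the bookkeeping in step (iii): relating the candidate $\alpha(t+n)$ and $\sigma(t+n)$ — which live with respect to the \emph{noisy} Hankel matrices — back to the true plant trajectory, and showing the resulting error is linear in the quantities $\lVert\bar u(t+n)\rVert_1$, $\lVert\alpha(t+n)\rVert_1$, $\lVert\sigma(t+n)\rVert_\infty$ appearing in~\eqref{eq:MPC6} with precisely the coefficients the recursion prescribes. This is where the interplay of the Hankel-matrix noise, the extended-state formulation, and the $n$-step structure all come together, and getting the constants to match~\eqref{eq:constraint_tightening2} exactly is delicate. Once the candidate is shown feasible for all sufficiently small $\bar\varepsilon$, the existence of $\bar\varepsilon_0$ follows because every offending error term carries a factor of $\bar\varepsilon$ (or of $\lVert\sigma^*(t)\rVert_\infty\le\bar\varepsilon(1+\lVert\alpha^*(t)\rVert_1)$) and all the norms of the optimal solution are bounded uniformly in terms of $V_{ROA}$, so the slack in the original (untightened) constraints at time $t$ — which exists because $(u^s,y^s)\in\text{int}(\mathbb{U}\times\mathbb{Y})$ and the nominal prediction satisfies the tightened constraints — dominates the perturbation.
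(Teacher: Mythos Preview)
Your plan is essentially the paper's own proof: the same shifted-plus-deadbeat candidate, the same use of Lemma~\ref{lem:pred_error} for the output error, $\Gamma$ for the appended input,~\eqref{eq:prop_proof_alpha_candidate} with $c_{pe}$ for $\alpha'$, and the recursion~\eqref{eq:constraint_tightening1}--\eqref{eq:constraint_tightening2} to collapse the bookkeeping. Two small corrections to keep you from getting stuck: (i) in the $\alpha$ candidate the extended state is $\xi_t$, not $\xi_{t+n}$, because $\bar u'(t{+}n)$ is indexed from $-n$ and its first $n$ entries are the already-applied inputs $u_{[t,t+n-1]}$; (ii) the verification of~\eqref{eq:MPC6} is not an induction on $k$ but a direct check---for $k\in\mathbb{I}_{[0,L-2n-1]}$ you use the constraint at time $t$ for step $k+n$ together with~\eqref{eq:constraint_tightening2}, while for the appended range $k\in\mathbb{I}_{[L-2n,L-n-1]}$ (not $\mathbb{I}_{[0,n-1]}$) there is no corresponding step at time $t$ and you instead argue directly that all terms on the left of~\eqref{eq:MPC6} are $\mathcal{O}(\bar\varepsilon)$ and hence below $y_{\max}$ for $\bar\varepsilon$ small.
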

\begin{proof}
First, we recall the candidate solution at time $t+n$ used to prove recursive feasibility in~\cite{Berberich19b} without the tightened constraints~\eqref{eq:MPC6}.
For $k\in\mathbb{I}_{[-n,L-2n-1]}$, the input candidate is chosen as the shifted previously optimal solution, i.e., $\bar{u}_k'(t+n)=\bar{u}_{k+n}^*(t)$.
The output candidate is chosen as $\bar{y}_{[-n,-1]}'(t+n)=\tilde{y}_{[t,t+n-1]}$ and $\bar{y}_k'(t+n)=\hat{y}_{t+n+k}$ for $k\in\mathbb{I}_{[0,L-2n-1]}$, where $\hat{y}$ denotes the output trajectory resulting from an open-loop application of the input $\bar{u}^*(t)$, compare~\eqref{eq:y_hat_def}.
It follows from controllability that the above input candidate can be extended such that $\bar{u}_{[L-2n,L-n-1]}'(t+n)$ steers the system to zero in $n$ steps.
If the noise bound $\bar{\varepsilon}$ is sufficiently small, then $\hat{y}_{t+k}$ becomes arbitrarily small for $k\in\mathbb{I}_{[L-n,L-1]}$ due to~\eqref{eq:lem_pred_error} together with $\bar{y}^*_{[L-n,L-1]}(t)=0$.
Therefore, for $\bar{\varepsilon}$ sufficiently small, the above-defined input $\bar{u}'_{[L-2n,L-n-1]}(t+n)$ satisfies the input constraints in~\eqref{eq:MPC3} (recall that we assumed $0\in\text{int}(\mathbb{U})$).
At time instants $k\in\mathbb{I}_{[L-2n,L-n-1]}$, the output candidate $\bar{y}_k'(t+n)$ is chosen as the output trajectory resulting from this input.
Further, the weighting vector candidate $\alpha'(t+n)$ is chosen as
\begin{align}\label{eq:prop_proof_alpha_candidate}
\alpha'(t+n)=H_{u\xi}^\dagger\begin{bmatrix}\bar{u}'(t+n)\\\xi_t\end{bmatrix},
\end{align}
where $H_{u\xi}^\dagger$ is the Moore-Penrose inverse of $H_{u\xi}$, compare also~\eqref{eq:Huxi}.
The slack variable is defined as $\sigma'(t+n)=H_{L+n}(\tilde{y}^d)\alpha'(t+n)-\bar{y}'(t+n)$.
This implies that~\eqref{eq:MPC1}-\eqref{eq:MPC3} hold.
Moreover, analogously to~\cite{Berberich19b}, it can be shown that this candidate solution also satisfies~\eqref{eq:MPC5}.
It thus only remains to show that~\eqref{eq:MPC6} holds.
Note that, by definition, the optimal solution at time $t$ satisfies
\begin{align}\label{eq:prop_rec_feas_proof1}
\lVert\bar{y}_k^*(t)\rVert_\infty+&a_{1,k}\lVert\bar{u}^*(t)\rVert_1+a_{2,k}\lVert\alpha^*(t)\rVert_1\\\nonumber
&+a_{3,k}\lVert\sigma^*(t)\rVert_\infty+a_{4,k}\leq y_{\max}.
\end{align}
We bound now each separate component appearing in the constraint~\eqref{eq:MPC6} at time $t+n$ to show that it is satisfied for $k\in\mathbb{I}_{[0,L-2n-1]}$ by the candidate solution defined above.
First, it holds for the output that
\begin{align}\nonumber
&\lVert\bar{y}_k'(t+n)\rVert_\infty\leq\lVert\bar{y}_{k+n}^*(t)\rVert_\infty+\lVert\hat{y}_{t+n+k}-\bar{y}_{k+n}^*(t)\rVert_\infty\\\label{eq:prop_proof_y}
&\stackrel{\eqref{eq:lem_pred_error},\eqref{eq:prop_rec_feas_proof1}}{\leq}
y_{\max}-a_{1,k+n}\lVert\bar{u}^*(t)\rVert_1-a_{2,k+n}\lVert\alpha^*(t)\rVert_1\\\nonumber
&-a_{3,k+n}\lVert\sigma^*(t)\rVert_\infty-a_{4,k+n}+\bar{\varepsilon}(1+\rho_{2n+k})\lVert\alpha^*(t)\rVert_1\\\nonumber
&+(1+\rho_{2n+k})\lVert\sigma^*(t)\rVert_\infty+\bar{\varepsilon}\rho_{2n+k}.
\end{align}
We obtain for the input candidate that
\begin{align*}
\lVert\bar{u}'(t+n)\rVert_1=&\lVert\bar{u}_{[0,L-n-1]}^*(t)\rVert_1\\
&+\lVert\bar{u}_{[L-2n,L-n-1]}'(t+n)\rVert_1.
\end{align*}
Due to the terminal equality constraint $\bar{u}^*_{[L-n,L-1]}(t)=0$ as well as the definition of $\hat{y}$ in~\eqref{eq:y_hat_def}, we can use~\eqref{eq:ass_ctrb} to bound the second term on the right-hand side as
\begin{align*}
\lVert\bar{u}_{[L-2n,L-n-1]}'(t+n)\rVert_1\stackrel{\eqref{eq:ass_ctrb}}{\leq}\Gamma\lVert\hat{y}_{[t+L-n,t+L-1]}\rVert_\infty.
\end{align*}
Using now the bound~\eqref{eq:lem_pred_error} together with the fact that $\bar{y}_k^*(t)=0$ for $k\in\mathbb{I}_{[L-n,L-1]}$ according to the terminal equality constraint~\eqref{eq:MPC3}, it follows that
\begin{align}\label{eq:prop_proof_u}
\lVert&\bar{u}'_{[L-2n,L-n-1]}(t+n)\rVert_1\leq 
\Gamma\big(\bar{\varepsilon}\rho_{L}^{\max}\\\nonumber
&+\bar{\varepsilon}(1+\rho_{L}^{\max})\lVert\alpha^*(t)\rVert_1+(1+\rho_{L}^{\max})\lVert\sigma^*(t)\rVert_\infty\big),
\end{align}
where $\rho_L^{\max}=\max_{k\in\mathbb{I}_{[0,n-1]}}\rho_{L+k}$.
Further, due to~\eqref{eq:prop_proof_alpha_candidate}, $\alpha'(t+n)$ can be bounded as
\begin{align}\label{eq:prop_proof_alpha}
\lVert\alpha'(t+n)\rVert_1\leq&c_{pe}(\lVert\bar{u}'(t+n)\rVert_1+\underbrace{\lVert\xi_t\rVert_1}_{\leq\xi_{\max}}),
\end{align}
with $c_{pe}=\lVert H_{u\xi}^\dagger\rVert_1$.
Finally, it is shown in~\cite{Berberich19b} that the slack variable candidate satisfies the constraint~\eqref{eq:MPC5}, i.e.,
\begin{align}\label{eq:prop_proof_sigma}
\lVert\sigma'(t+n)\rVert_\infty\leq&\bar{\varepsilon}(1+\lVert\alpha'(t+n)\rVert_1).
\end{align}
It follows from straightforward algebraic manipulations, using the definition of the coefficients $a_{i,k}$ in~\eqref{eq:constraint_tightening1} and~\eqref{eq:constraint_tightening2} as well as the bounds~\eqref{eq:prop_proof_y}, \eqref{eq:prop_proof_u}, \eqref{eq:prop_proof_alpha}, and~\eqref{eq:prop_proof_sigma}, that the candidate solution satisfies
\begin{align*}
\lVert\bar{y}_k'(t+n)\rVert_\infty+&a_{1,k}\lVert\bar{u}'(t+n)\rVert_1+a_{2,k}\lVert\alpha'(t+n)\rVert_1\\
&+a_{3,k}\lVert\sigma'(t+n)\rVert_\infty+a_{4,k}\leq y_{\max},
\end{align*}
i.e., the constraint~\eqref{eq:MPC6} holds for $k\in\mathbb{I}_{[0,L-2n-1]}$.
If $\bar{\varepsilon}$ is sufficiently small, then $\bar{y}_k'(t+n)$ is arbitrarily small for $k\in\mathbb{I}_{[L-2n,L-n-1]}$ (recall its definition at the beginning of the proof) and also the other components appearing on the left-hand side of the constraint~\eqref{eq:MPC6} become arbitrarily small (compare the discussion below~\eqref{eq:constraint_tightening2}).
Hence, since $y_{\max}>0$, the constraint~\eqref{eq:MPC6} holds also for $k\in\mathbb{I}_{[L-2n,L-n-1]}$ which thus concludes the proof.
\end{proof}

Proposition~\ref{prop:rec_feas} shows that the proposed $n$-step MPC scheme is $n$-step feasible in the sense that, for a given Lyapunov function sublevel set $V_t\leq V_{ROA}$ with some $V_{ROA}>0$, there exists a noise bound $\bar{\varepsilon}$ sufficiently small such that the scheme is feasible at time $t+n$.
In Section~\ref{sec:theory_stab} we prove that, in addition, the sublevel set $V_t\leq V_{ROA}$ is positively invariant, which then implies recursive feasibility of the MPC scheme in the classical sense.
The set $V_t\leq V_{ROA}$ can be seen as the guaranteed region of attraction of the closed loop under the proposed MPC scheme.
In order to ensure a larger region of attraction $V_{ROA}$, the admissible noise bound $\bar{\epsilon}$ needs to be smaller~\cite{Berberich19b}.

The proof of Proposition~\ref{prop:rec_feas} utilizes the same candidate solution as in~\cite{Berberich19b}.
While it is shown in~\cite{Berberich19b} that this candidate solution satisfies the constraints~\eqref{eq:MPC1}-\eqref{eq:MPC5}, Proposition~\ref{prop:rec_feas} proves that it additionally satisfies the tightened constraints~\eqref{eq:MPC6}, which is the main technical contribution of this paper.

\begin{remark}\label{rk:setpoint0}
All results in this paper hold true qualitatively for non-zero setpoints $(u^s,y^s)\neq(0,0)$, but the guarantees deteriorate quantitatively.
To be more precise, for non-zero setpoints, closed-loop stability and constraint satisfaction hold true under suitable assumptions on system and design parameters, but the maximal noise bound for which they are guaranteed decreases (compare~\cite[Remark 5]{Berberich19b} for details).
Moreover, the implementation of the constraint tightening~\eqref{eq:MPC6} changes slightly when considering $(u^s,y^s)\neq(0,0)$.
This is due to the fact that, in this case, the controllability bound~\eqref{eq:ass_ctrb} leads to
\begin{align*}
\lVert\bar{u}_{[0,n-1]}-u^s_n\rVert_1\leq\Gamma\lVert y_{[0,n-1]}-y^s_n\rVert_\infty.
\end{align*}
When modifying the bound~\eqref{eq:prop_proof_alpha} accordingly, an additional additive term on the right-hand side of~\eqref{eq:prop_proof_alpha} needs to be introduced, depending on the setpoint $(u^s,y^s)$.
In this paper, we do not consider the case $(u^s,y^s)\neq(0,0)$ since, compared to the other terms involved in the constraint tightening, the effect of a non-zero setpoint is relatively small and considering it complicates some of the arguments.
\end{remark}

\subsection{Constraint satisfaction and stability}\label{sec:theory_stab}

The following result shows that the proposed MPC scheme renders the origin practically exponentially stable w.r.t. the noise bound $\bar{\varepsilon}$, and that the closed-loop output satisfies the constraints $y_t\in\mathbb{Y}$ for all $t\geq0$.

\begin{theorem}\label{thm:MPC_stab}
Suppose Assumptions~\ref{ass:pe} and~\ref{ass:L_geq_2n} hold.
Then, for any $V_{ROA}>0$, there exist constants $\underline{\lambda}_\alpha,\overline{\lambda}_\alpha,\underline{\lambda}_\sigma,\overline{\lambda}_\sigma>0$ such that, for all $\lambda_\alpha,\lambda_\sigma$ satisfying
\begin{align}\label{eq:thm_bounds1}
\begin{split}
\underline{\lambda}_\alpha\leq\lambda_\alpha\leq\overline{\lambda}_\alpha,\quad\underline{\lambda}_\sigma\leq\lambda_\sigma\leq\overline{\lambda}_\sigma,
\end{split}
\end{align}
there exist constants $\bar{\varepsilon}_0,\bar{c}_{pe}>0$, as well as a continuous, strictly increasing function $\beta:[0,\bar{\varepsilon}_0]\to[0,V_{ROA}]$ with $\beta(0)=0$, such that, for all $\bar{\varepsilon},c_{pe}$ satisfying
\begin{align}\label{eq:thm_bounds2}
\bar{\varepsilon}\leq\min\left\{\bar{\varepsilon}_0,\frac{\bar{c}_{pe}}{c_{pe}}\right\},
\end{align}
for any initial condition satisfying $V_0\leq V_{ROA}$, the closed loop of the $n$-step MPC scheme satisfies the constraints, i.e., $y_t\in\mathbb{Y}$ for all $t\geq0$, and $V_t$ converges exponentially to $V_t\leq\beta(\bar{\varepsilon})$.
\end{theorem}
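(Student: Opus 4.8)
The plan is to combine the $n$-step recursive feasibility result of Proposition~\ref{prop:rec_feas} with a Lyapunov analysis of $V_t$ in the spirit of~\cite{Berberich19b}, and to read off closed-loop constraint satisfaction directly from the tightened constraints~\eqref{eq:MPC6} together with the prediction error bound of Lemma~\ref{lem:pred_error}. All estimates are carried out on the sublevel set $\{V_t\leq V_{ROA}\}$, on which feasibility of~\eqref{eq:MPC} holds and $\lVert\xi_t\rVert$ as well as the applied inputs and outputs are uniformly bounded (the inputs lie in the compact set $\mathbb{U}$).

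First I would establish a Lyapunov decrease. Using the candidate solution $\bar u'(t{+}n),\bar y'(t{+}n),\alpha'(t{+}n),\sigma'(t{+}n)$ from the proof of Proposition~\ref{prop:rec_feas}, I would bound the candidate cost at $t{+}n$: the stage-cost terms are the shifted optimal ones plus the contributions for $k\in\mathbb{I}_{[L-2n,L-n-1]}$, which are $O(\bar\varepsilon)$ and bounded in terms of $\lVert\xi_t\rVert$; the regularization terms are controlled via $\lVert\alpha'(t{+}n)\rVert_2\leq\lVert\alpha'(t{+}n)\rVert_1\leq c_{pe}(\lVert\bar u'(t{+}n)\rVert_1+\xi_{\max})$ together with $\lVert\sigma'(t{+}n)\rVert_\infty\leq\bar\varepsilon(1+\lVert\alpha'(t{+}n)\rVert_1)$, where $\lVert\bar u'(t{+}n)\rVert_1$ is in turn bounded by the previous optimal solution. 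With the upper bounds on $\lambda_\alpha,\lambda_\sigma$ in~\eqref{eq:thm_bounds1}, this yields $J_L^*(t{+}n)\leq J_L^*(t)-\sum_{k=0}^{n-1}\ell(\bar u_k^*(t),\bar y_k^*(t))+c\,\bar\varepsilon$ for a constant $c$ uniform on $\{V_t\leq V_{ROA}\}$, where the $c_{pe}$-dependent error and regularization contributions are tamed by the coupling condition~\eqref{eq:thm_bounds2}; here the lower bounds on $\lambda_\alpha,\lambda_\sigma$ enter to keep $\bar\varepsilon\lVert\alpha^*(t)\rVert$ and $\lVert\sigma^*(t)\rVert$ (which appear both in the cost decrease and in the prediction error) of order $\sqrt{\bar\varepsilon}$. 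Adding $\gamma W(\xi_{t+n})-\gamma W(\xi_t)$ and invoking the IOSS inequality, the quadratic stage cost dominates the residual $W$-decrease for a suitable $\gamma>0$, so that $V_{t+n}\leq V_t-\tilde c\lVert\xi_t\rVert_2^2+c\,\bar\varepsilon\leq(1-\rho)V_t+c\,\bar\varepsilon$ for some $\rho\in(0,1)$. Iterating gives $V_t\leq(1-\rho)^{\lfloor t/n\rfloor}V_0+\tfrac{c}{\rho}\bar\varepsilon$, hence exponential convergence to $V_t\leq\beta(\bar\varepsilon)$ with $\beta(\bar\varepsilon)\coloneqq\tfrac{c}{\rho}\bar\varepsilon$ (shrinking $\bar\varepsilon_0$ so that $\beta$ maps into $[0,V_{ROA}]$ and, if needed, dominating it by a strictly increasing continuous function vanishing at $0$). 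Choosing $\bar\varepsilon_0,\bar c_{pe}$ small enough that $c\,\bar\varepsilon\leq\rho V_{ROA}$ under~\eqref{eq:thm_bounds2} makes $\{V_t\leq V_{ROA}\}$ positively invariant, which together with Proposition~\ref{prop:rec_feas} gives feasibility at every sampling instant $t=0,n,2n,\dots$ by induction; boundedness of $\lVert\xi_{t+k}\rVert$ for $k\in\mathbb{I}_{[1,n-1]}$ via $W$ and the applied data then yields practical stability at the intermediate times.

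Second, I would prove constraint satisfaction. At a feasible sampling time $t$, Algorithm~\ref{alg:MPC_n_step} applies $u_{t+k}=\bar u_k^*(t)$ for $k\in\mathbb{I}_{[0,n-1]}$, so the true closed-loop output coincides with the open-loop prediction, $y_{t+k}=\hat y_{t+k}$ as in~\eqref{eq:y_hat_def}. By the triangle inequality and Lemma~\ref{lem:pred_error},
\begin{align*}
\lVert y_{t+k}\rVert_\infty\leq\lVert\bar y_k^*(t)\rVert_\infty+\bar\varepsilon\rho_{n+k}+\bar\varepsilon(1+\rho_{n+k})\lVert\alpha^*(t)\rVert_1+(1+\rho_{n+k})\lVert\sigma^*(t)\rVert_\infty.
\end{align*}
For $k\in\mathbb{I}_{[0,n-1]}$ we have $\rho_{n+k}\leq\rho_n^{\max}$, so the coefficients in~\eqref{eq:constraint_tightening1} satisfy $a_{1,k}=0$, $a_{2,k}=\bar\varepsilon(1+\rho_n^{\max})\geq\bar\varepsilon(1+\rho_{n+k})$, $a_{3,k}=1+\rho_n^{\max}\geq1+\rho_{n+k}$, and $a_{4,k}=\bar\varepsilon\rho_n^{\max}\geq\bar\varepsilon\rho_{n+k}$; hence the right-hand side above is bounded by the left-hand side of~\eqref{eq:MPC6}, which is $\leq y_{\max}$, giving $y_{t+k}\in\mathbb{Y}$. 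Since feasibility holds at every sampling instant and every closed-loop time step lies in such a length-$n$ window, this establishes $y_t\in\mathbb{Y}$ for all $t\geq0$.

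The main obstacle will be the Lyapunov-decrease step: carefully propagating the chain of bounds $\bar u'\to\alpha'\to\sigma'$ and verifying that the resulting error and regularization terms are genuinely of order $\bar\varepsilon$ (uniformly on $\{V_t\leq V_{ROA}\}$) and are dominated both by the combined stage-cost/IOSS decrease and, for the invariance step, by $\rho V_{ROA}$ — which is precisely where the two-sided bounds on $\lambda_\alpha,\lambda_\sigma$ in~\eqref{eq:thm_bounds1} and the coupling bound on $\bar\varepsilon c_{pe}$ in~\eqref{eq:thm_bounds2} are used, and where the multi-step (rather than one-step) structure is essential. By contrast, the constraint-satisfaction conclusion is a short consequence of the purpose-built coefficients~\eqref{eq:constraint_tightening1}, the recursion~\eqref{eq:constraint_tightening2} being needed only to make the recursive feasibility argument of Proposition~\ref{prop:rec_feas} go through.
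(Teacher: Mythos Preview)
Your proposal is correct and follows essentially the same approach as the paper: the Lyapunov-decrease and positive-invariance argument you sketch is precisely the analysis from~\cite{Berberich19b} that the paper invokes verbatim (the paper simply observes that Proposition~\ref{prop:rec_feas} shows the same candidate is feasible with~\eqref{eq:MPC6} included and then cites~\cite{Berberich19b} for the rest), and your constraint-satisfaction step via $y_{t+k}=\hat y_{t+k}$, Lemma~\ref{lem:pred_error}, and the coefficients~\eqref{eq:constraint_tightening1} is identical to the paper's derivation of~\eqref{eq:thm_proof_stab1}--\eqref{eq:thm_proof_stab2}.
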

\begin{proof}
In~\cite{Berberich19b}, it is shown that the set $V_t\leq V_{ROA}$ is robustly positively invariant and that $V_t$ converges exponentially to $V_t\leq\beta(\bar{\varepsilon})$, using the same candidate solution as in Proposition~\ref{prop:rec_feas}, but for an MPC scheme without the tightened output constraint~\eqref{eq:MPC6}.
As Proposition~\ref{prop:rec_feas} shows, also the tightened constraints are satisfied by this candidate solution and hence, we can apply the exact same arguments as in~\cite{Berberich19b} to conclude recursive feasibility as well as exponential convergence to $V_t\leq\beta(\bar{\varepsilon})$.
To prove closed-loop output constraint satisfaction, note that $\hat{y}_{t+k}=y_{t+k}$ for $k\in\mathbb{I}_{[0,n-1]}$ due to the $n$-step MPC scheme, with $\hat{y}$ from~\eqref{eq:y_hat_def}.
Hence,~\eqref{eq:lem_pred_error} implies
\begin{align}\label{eq:thm_proof_stab1}
\lVert y_{t+k}-\bar{y}_k^*(t)\rVert_\infty&\leq\bar{\varepsilon}(1+\rho_{n}^{\max})\lVert\alpha^*(t)\rVert_1\\\nonumber
&\quad+(1+\rho_{n}^{\max})\lVert\sigma^*(t)\rVert_\infty+\bar{\varepsilon}\rho_{n}^{\max},
\end{align}
for any $k\in\mathbb{I}_{[0,n-1]}$, where $\rho_n^{\max}=\max_{k\in\mathbb{I}_{[0,n-1]}}\rho_{n+k}$.
Further, since the optimal solution satisfies~\eqref{eq:MPC6}, it follows from the definition of the coefficients $a_{i,k}$, $i\in\mathbb{I}_{[1,4]},k\in\mathbb{I}_{[0,n-1]}$, in~\eqref{eq:constraint_tightening1} that
\begin{align}\label{eq:thm_proof_stab2}
\begin{split}
\lVert\bar{y}_k^*(t)\rVert_\infty&\leq y_{\max}-\bar{\varepsilon}(1+\rho_{n}^{\max})\lVert\alpha^*(t)\rVert_1\\
&-(1+\rho_{n}^{\max})\lVert\sigma^*(t)\rVert_\infty-\bar{\varepsilon}\rho_{n}^{\max},
\end{split}
\end{align}
for any $k\in\mathbb{I}_{[0,n-1]}$.
Combining~\eqref{eq:thm_proof_stab1} and~\eqref{eq:thm_proof_stab2}, we thus obtain
\begin{align*}
\lVert y_{t+k}\rVert_\infty\leq\lVert y_{t+k}-\bar{y}_k^*(t)\rVert_\infty+\lVert\bar{y}_k^*(t)\rVert_\infty\leq y_{\max},
\end{align*}
i.e., the closed-loop output satisfies the constraints.
\end{proof}

Except for output constraint satisfaction $y_t\in\mathbb{Y}$, Theorem~\ref{thm:MPC_stab} follows from Proposition~\ref{prop:rec_feas} and the results in~\cite{Berberich19b}.
It is the key contribution of the present paper to suggest a suitable output constraint tightening, compare~\eqref{eq:MPC6} with parameters~\eqref{eq:constraint_tightening1} and~\eqref{eq:constraint_tightening2}, and to prove recursive feasibility of this constraint tightening (Proposition~\ref{prop:rec_feas}) as well as closed-loop constraint satisfaction (Theorem~\ref{thm:MPC_stab}).
A detailed discussion of the conditions on the system parameters $\lambda_{\alpha},\lambda_{\sigma}$ and the data parameters $\bar{\varepsilon}$ and $c_{pe}$, and their relation to the region of attraction $V_0\leq V_{ROA}$ and the tracking error described via $\beta(\bar{\varepsilon})$, can be found in~\cite{Berberich19b}.
Loosely speaking, it is shown therein that closed-loop performance improves if either the noise bound decreases or persistence of excitation increases quantitatively (i.e., $c_{pe}$ decreases).

\section{Estimation of system constants}\label{sec:system_constants}
In order to define the tightened output constraints~\eqref{eq:MPC6} and thus, to implement the proposed MPC scheme in practice, appropriate values for the coefficients $a_{i,k}$ as in~\eqref{eq:constraint_tightening1} and~\eqref{eq:constraint_tightening2} need to be computed.
This requires knowledge of the system constants $\Gamma$ and $\rho_k$ defined in~\eqref{eq:ass_ctrb} and~\eqref{eq:ass_obsv}, respectively.
In this section, we employ Theorem~\ref{thm:hankel} to derive a purely data-driven estimation procedure for these system constants.
Similar to~\cite{Koch20}, which verifies integral quadratic constraints from measured data, we use Theorem~\ref{thm:hankel} to optimize over all system trajectories based on a single measured trajectory in order to compute the desired constants, i.e., to verify a quantitative controllability and observability property.

While the stability and constraint satisfaction results of the previous section hold true if the data are affected by noise (assuming that upper bounds on $\rho_k$ and $\Gamma$ are available), we assume throughout this section that a noise-free input-output trajectory $\{u_k^d,y_k^d\}$ of the unknown system is available.
An extension of the estimation procedures presented in this section to the case of noisy measurements is an interesting issue for future research.

\subsection{Controllability}
In the following, we derive a purely data-driven optimization problem to compute a constant $\Gamma>0$ according to~\eqref{eq:ass_ctrb}.
To be more precise, $\Gamma$ is equal to the optimal value of the optimization problem
\begin{align}\nonumber
\max_{u,y,\alpha}\min_{\bar{u},\bar{y},\bar{\alpha}}\quad&\lVert\bar{u}_{[0,n-1]}\rVert_1\\\nonumber
s.t.\quad&(\bar{u}_{[-n,-1]},\bar{y}_{[-n,-1]})=(u_{[-n,-1]},y_{[-n,-1]}),\\\nonumber
&(\bar{u}_{[n,2n-1]},\bar{y}_{[n,2n-1]})=(0,0),\\\label{eq:Gamma}
&\lVert y_{[0,n-1]}\rVert_\infty\leq1,\>\>u_{[0,n-1]}=0,\\\nonumber
&\begin{bmatrix}H_{3n}(u^d)\\H_{3n}(y^d)\end{bmatrix}\bar{\alpha}=\begin{bmatrix}\bar{u}\\\bar{y}
\end{bmatrix},\begin{bmatrix}H_{2n}(u^d)\\H_{2n}(y^d)\end{bmatrix}\alpha=\begin{bmatrix}u\\y
\end{bmatrix}.
\end{align}
In Problem~\eqref{eq:Gamma}, we impose $\lVert y_{[0,n-1]}\rVert_\infty\leq1$ instead of $\lVert y_{[0,n-1]}\rVert_\infty=1$ since the inequality constraint is always satisfied with equality due to the maximization in $y$, and the constraint $\lVert y_{[0,n-1]}\rVert_\infty\leq1$ is convex.
We proceed in solving~\eqref{eq:Gamma} as follows:
The set of all feasible $(u,y)$ in~\eqref{eq:Gamma} is a compact convex polytope $\mathbb{Z}$ and hence, instead of maximizing over all $(u,y)$, it suffices to solve the minimization problem for all vertices of this polytope.
To be more precise, denote the set of vertices of $\mathbb{Z}$ by $\mathbb{Z}_v=\{(u^i,y^i), i\in\mathbb{I}_{[1,l]}\}$ and note that $\mathbb{Z}_v$ can be conveniently computed from data, e.g., using the MPT3-toolbox~\cite{Herceg13}.
We compute $\Gamma$ as the optimal value of
\begin{align}\label{eq:Gamma_LP}
\max_{i\in\mathbb{I}_{[1,l]}}\min_{\bar{u},\bar{y},\bar{\alpha}}\quad&\lVert\bar{u}_{[0,n-1]}\rVert_1\\\nonumber
s.t.\quad&(\bar{u}_{[-n,-1]},\bar{y}_{[-n,-1]})=(u^i_{[-n,-1]},y^i_{[-n,-1]}),\\\nonumber
&(\bar{u}_{[n,2n-1]},\bar{y}_{[n,2n-1]})=(0,0),\\\nonumber
&\begin{bmatrix}H_{3n}(u^d)\\H_{3n}(y^d)\end{bmatrix}\bar{\alpha}=\begin{bmatrix}\bar{u}\\\bar{y}\end{bmatrix}.
\end{align}
Note that, for any fixed $i$, the inner minimization problem in~\eqref{eq:Gamma_LP} is a linear program (LP) which can be solved efficiently using standard solvers.
Hence, to compute the constant $\Gamma$, we need to (i) compute the vertex set $\mathbb{Z}_v$ and (ii) solve a single LP for each element of $\mathbb{Z}_v$.
Due to the computation of the vertices $\mathbb{Z}_v$, the complexity of the above approach scales exponentially with the number of inputs and outputs and with the system dimension.
Nevertheless, it remains practical for medium-sized problems, due to the availability of efficient LP solvers.
Moreover, the computation is carried out once offline, i.e., it does not increase the online complexity of the proposed MPC scheme.

\subsection{Observability}
Next, we compute the constants $\rho_k$, $k\in\mathbb{I}_{[n,L+n-1]}$, as in~\eqref{eq:ass_obsv} from data.
Applying Theorem~\ref{thm:hankel}, the optimization problem~\eqref{eq:ass_obsv} can be reformulated as
\begin{align}\nonumber
\rho_k\>\>=\quad&\max_{y,\alpha}\>\lVert y_k\rVert_\infty\\
\label{eq:rho_k}
\text{s.t.}\>\>&\lVert y_{[0,n-1]}\rVert_\infty\leq1,\\\nonumber
&\begin{bmatrix}H_{k+1}(u^d)\\H_{k+1}(y^d)\end{bmatrix}\alpha=\begin{bmatrix}0\\y_{[0,k]}
\end{bmatrix}.
\end{align}
As in~\eqref{eq:Gamma}, we replace $\lVert y_{[0,n-1]}\rVert_\infty=1$ by $\lVert y_{[0,n-1]}\rVert_\infty\leq1$ since the constraint is always active for the optimal solution.
The maximization of $\lVert y_k\rVert_{\infty}$ can be reformulated as a linear objective via additional integer variables and hence, Problem~\eqref{eq:rho_k} can be directly solved using a mixed integer programming solver, which is, e.g., provided by MOSEK~\cite{MOSEK15}.
For systems with one output, i.e., $p=1$, Problem~\eqref{eq:rho_k} is an LP.

\section{Example}\label{sec:example}
In this section, we apply the proposed MPC scheme to the example from~\cite{Terzi19}.
The goal is to control an unknown LTI system with transfer function
\begin{align*}
G(z)=0.01\frac{2z^2+6.1z+1.1}{z^3-2.1z^2+1.5z-0.3}
\end{align*}
such that closed-loop input and output constraint satisfaction is guaranteed for the constraint sets $\mathbb{U}=\mathbb{Y}=[-10,10]$, i.e., $y_{\max}=10$.
We produce data $\{u_k^d,\tilde{y}_k^d\}_{k=0}^{N-1}$ by samplying the input uniformly from $\mathbb{U}$, where the data length is $N=1000$.
The output is affected by measurement noise with bound $\bar{\varepsilon}=0.0001$.
In order to study the conservatism of the tightened output constraints, we consider the stage cost $\ell(u,y)=-y$, i.e., the MPC objective is maximization of the output $y$.
Further, we choose $\lambda_\alpha\varepsilon=1$ as well as $\lambda_\sigma=100$.
Note that, since the stage cost function is not positive definite in the input and output, the stability guarantees derived in Theorem~\ref{thm:MPC_stab} do not hold in general.
Nevertheless, it is readily seen that recursive feasibility and constraint satisfaction are not affected since a terminal equality constraint as in~\eqref{eq:MPC3} is included.
We consider a terminal equality constraint with the setpoint $(u^s,y^s)=(5,5)$ and a prediction horizon $L=10$.
The system constants $\Gamma$ and $\rho_k$ for the constraint tightening are computed with the procedure described in Section~\ref{sec:system_constants} and they are indeed equal to the corresponding model-based values.

\begin{figure}[h!]
\begin{center}
\includegraphics[width=0.516\textwidth]{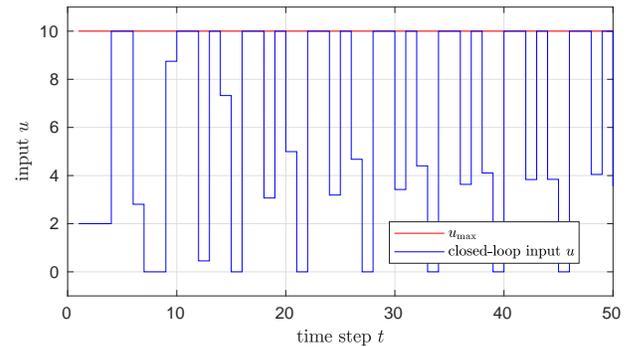}
\end{center}
\caption{Closed-loop input computed via the proposed MPC scheme.}
\label{fig:ex_u}
\end{figure}

\begin{figure}[h!]
\begin{center}
\includegraphics[width=0.516\textwidth]{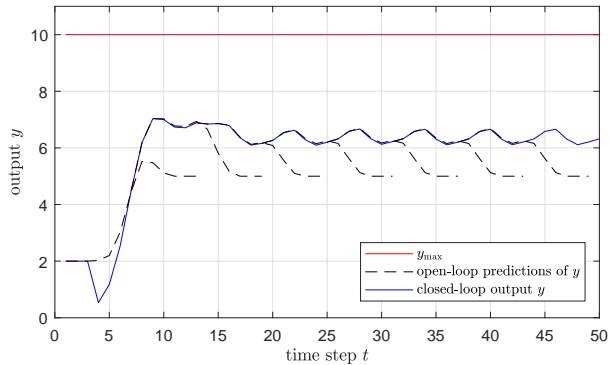}
\end{center}
\caption{Open-loop predictions and closed-loop output trajectory resulting from an application of the proposed MPC scheme.}
\label{fig:ex_y}
\end{figure}

The closed-loop input of the proposed data-driven $n$-step MPC scheme is displayed in Figure~\ref{fig:ex_u}, where it can be seen that the input operates on the boundary of the constraints $u_{\max}=10$ at several time steps.
Moreover, Figure~\ref{fig:ex_y} shows open-loop predictions at several time instants as well as the closed-loop output under the proposed MPC scheme.
Due to the tightened constraints~\eqref{eq:MPC6}, the output does not approach the boundary of the constraints but oscillates around $y=7$.
A similar behavior can be observed for the application of the MPC scheme in~\cite{Terzi19}.
Notably, the noise level that can be handled in~\cite{Terzi19} for the present example is considerably larger than the one considered above.
For larger noise levels, the proposed MPC scheme is initially infeasible since $a_{4,k}\geq y_{\max}$ already for small values of $k$.
This is due to the fact that, in our proof, several conservative bounds are used.
In particular, for bounding $\alpha'(t+n)$ in the proof of Proposition~\ref{prop:rec_feas}, we bound a product involving $H_{u\xi}^\dagger$ by $c_{pe}$ and we replace $\lVert\xi_t\rVert_1$ by $\xi_{\max}$, cf.~\eqref{eq:xi_max_def}, which is the main source of conservatism.
Nevertheless, the approach presented in this paper leads to end-to-end guarantees from noisy data of finite length to closed-loop output constraint satisfaction under mild assumptions.
Compared to~\cite{Terzi19}, our method is more general (i.e., it requires less assumptions) but more conservative, and it is very simple to apply since only a measured data trajectory as well as scalar estimates of the system constants $\Gamma$ and $\rho_k$ are required for its implementation.

\section{Conclusion}\label{sec:conclusion}

In this paper, we augment the recently proposed data-driven MPC scheme from~\cite{Berberich19b} by a constraint tightening to guarantee closed-loop output constraint satisfaction despite noisy output measurements.
The key contributions are (i) the constraint tightening, (ii) a proof of recursive feasibility and closed-loop constraint satisfaction, and (iii) data-based estimation procedures for two system constants related to controllability and observability of the unknown system.
Since the proposed MPC scheme uses noisy data to predict future trajectories, our result can also be seen as an alternative to model-based output-feedback MPC in the presence of parametric uncertainty and measurement noise.

\bibliographystyle{IEEEtran}   
\bibliography{Literature}  

\end{document}